\newcommand{\no}[1]{}
\renewcommand{\log}{\lg}
\newcommand{\Bleft}{B_\mathit{left}}
\newcommand{\Bright}{B_\mathit{right}}
\newcommand{\bigO}{\mathcal{O}}
\newtheorem{theorem}{Theorem}
\newtheorem{lemma}{Lemma}
\theoremstyle{definition}
\newtheorem{definition}{Definition}
\begin{document}

\begin{frontmatter}

\title{Universal Compressed Text Indexing\thanksref{Shonan}}

\author{%
Gonzalo Navarro\thanksref{Chile}}

\address{
Center for Biotechnology and Bioengineering (CeBiB), \\
Department of Computer Science, University of Chile. \\
{\tt gnavarro@dcc.uchile.cl}}

\thanks[Shonan]{This work started during Shonan Meeting 126 ``Computation over Compressed Structured Data''.}

\thanks[Chile]{Partially funded by Fondecyt Grant 1-170048 and Basal Funds
FB0001, Chile.}

\author{Nicola Prezza\thanksref{Pisa}}

\address{
Department of Computer Science, University of Pisa, Italy. \\
{\tt nicola.prezza@di.unipi.it}
\vspace*{-5mm}}

\thanks[Pisa]{This work was partially done while the author was holding a
post-doc position at the Technical University of Denmark (DTU). Partially
funded by the project MIUR-SIR CMACBioSeq (``Combinatorial methods for
analysis and compression of biological sequences'') grant n.~RBSI146R5L.}

\begin{abstract}
	
	The rise of repetitive datasets has lately generated a lot of interest
in compressed self-indexes based on dictionary compression, a rich and
heterogeneous family of techniques that exploits text repetitions in different
ways. For each such compression scheme, several different indexing solutions
have been proposed in the last two decades. To date, the fastest indexes for
repetitive texts are based on the run-length compressed Burrows--Wheeler
transform (BWT) and on the Compact Directed Acyclic Word Graph (CDAWG). The
most space-efficient indexes, on the other hand, are based on the Lempel--Ziv
parsing and on grammar compression. Indexes for more universal schemes such as
collage systems and macro schemes have not yet been proposed. Very recently,
Kempa and Prezza [STOC 2018] showed that all dictionary compressors can be
interpreted as approximation algorithms for the smallest \emph{string
attractor}, that is, a set of text positions capturing all distinct
substrings. Starting from this observation, in this paper we develop the first
\emph{universal} compressed self-index, that is, the first indexing data
structure based on string attractors, which can therefore be built on top of
any dictionary-compressed text representation. Let $\gamma$ be the size of a
string attractor for a text of length $n$. From known reductions, $\gamma$ can
be chosen to be asymptotically equal to any repetitiveness measure: number of
runs in the BWT, size of the CDAWG, number of Lempel--Ziv phrases, number of
rules in a grammar or collage system, size of a macro scheme. Our index takes
$\bigO(\gamma\log(n/\gamma))$ words of space and supports locating the $occ$
occurrences of any pattern of length $m$ in $\bigO(m\log n +
occ\log^{\epsilon}n)$ time, for any constant $\epsilon>0$. This is, in
particular, the first index for general macro schemes and collage systems. Our
result shows that the relation between indexing and compression is much deeper
than what was previously thought: the simple property standing at the core of
all dictionary compressors is sufficient to support fast indexed queries.

\begin{keyword}
Repetitive sequences; Compressed indexes; String attractors
\end{keyword}

\end{abstract}

\end{frontmatter}

\section{Introduction}

Efficiently indexing repetitive text collections is becoming of great importance due to the accelerating rate at which repetitive datasets are being produced in domains such as biology (where the number of sequenced individual genomes is increasing at an accelerating pace) and the web (with databases such as Wikipedia and GitHub being updated daily by thousands of users). 
A self-index on a string $S$ is a data structure that 
offers direct access to any substring of $S$ (and thus it replaces $S$), and 
at the same time supports indexed queries such as counting and locating pattern occurrences in $S$. 
Unfortunately, classic self-indexes~--- for example, the FM-index~\cite{FM05}~--- that work extremely well on standard datasets  fail on repetitive collections in the sense that their compression rate does not reflect the input's information content. This phenomenon can be explained in theory with the  fact that entropy compression is not able to take advantage of repetitions longer than (roughly) the logarithm of the input's length~\cite{gagie2006large}. For this reason, research in the last two decades focused on self-indexing based on  \emph{dictionary compressors} such as the Lempel--Ziv 1977 factorization (LZ77)~\cite{lempel1976complexity}, the run-length encoded Burrows--Wheeler transform (RLBWT)~\cite{burrows1994block} and context-free grammars (CFGs)~\cite{KiefferY00}, just to name the most popular ones. The  idea underlying  these compression techniques is to break the text into phrases coming from a dictionary (hence the name \emph{dictionary compressors}), and to represent each phrase using limited information (typically, a pointer to other text locations or to an external set of strings). This scheme allows taking full advantage of long repetitions; as a result, dictionary-compressed self-indexes can be orders of magnitude more space-efficient than entropy-compressed ones on repetitive datasets. 

The landscape of indexes for repetitive collections reflects that of dictionary compression strategies, with specific indexes developed for each compression strategy. Yet, a few main techniques stand at the core of most of the indexes. To date, the fastest indexes are based on the RLBWT and on the Compact Directed Acyclic Word Graph (CDAWG)~\cite{blumer1987complete,crochemore1997direct}. These indexes achieve optimal-time queries (i.e., asymptotically equal to those of suffix trees~\cite{Weiner73}) at the price of a space consumption higher than that of other compressed indexes. Namely, the former index~\cite{GNP18} requires $\bigO(r\log(n/r))$ words of space, $r$ being the number of equal-letter runs in the BWT of $S$, while the latter~\cite{BCspire17} uses $\bigO(e)$ words, $e$ being the size of the CDAWG of $S$. These two measures (especially $e$) have been experimentally confirmed to be not as small as others~--- such as the size of LZ77~--- on repetitive collections~\cite{belazzougui2015composite}. 

Better measures of repetitiveness include the size $z$ of the LZ77 factorization of $S$, the minimum size $g$ of a CFG (i.e., sum of the lengths of the right-hand sides of the rules) generating $S$, or the minimum size $g_{rl}$ of a run-length CFG~\cite{NIIBT16} generating $S$. Indexes using $\bigO(z)$ or $\bigO(g)$ space do exist, but optimal-time queries have not yet been achieved within this space. Kreft and Navarro \cite{KN13} introduced a self-index based on LZ77 
compression, which proved to be extremely space-efficient on highly repetitive
text collections \cite{CFMN16}. Their self-index uses $\bigO(z)$ space
and finds all the $occ$ occurrences of a pattern of length $m$ in time 
$\bigO((m^2 h + (m+occ)\log z)\log(n/z))$, where $h \le z$ is the maximum number of 
times a symbol is successively copied along the LZ77 parsing. A 
string of length $\ell$ is extracted in $\bigO(h\ell)$ time.
Similarly, self-indexes of size $\bigO(g)$ building on grammar compression \cite{CNfi10,CNspire12} can locate all $occ$ occurrences of a pattern in
$\bigO(m^2\log\log n + m\lg z+occ\lg z)$ time. Within this space, a
string of length $\ell$ can be extracted in time 
$\bigO(\log n + \ell/\log_\sigma n)$ \cite{BPT15}.
Alternative strategies based on Block Trees (BTs)~\cite{BGGKOPT15}  appeared recently.
%	BTs occupy a space slightly superlinear on $z$ and in practice  offer better performance than their grammar-based counterparts: Belazzougui et al.~\cite{BGGKOPT15} showed that in practice BTs 
%	are faster to access $S$ than grammar-compressed representations,
%	and use about the same space if the text is highly repetitive. 
%	More in detail, 
A BT on $S$ uses $\bigO(z\log(n/z))$
space, which is also the best asymptotic space obtained with grammar compressors
\cite{CLLPPSS05,Sak05,Jez15,Jez16,Ryt03}.
In exchange for using more space than LZ77 compression, the BT offers fast
extraction of substrings: $\bigO((1+\ell/\log_\sigma n)\log(n/z))$ time.
A self-index based on BTs has recently been described by 
Navarro \cite{Navspire17}. Various indexes based on combinations of LZ77, CFGs,
and RLBWTs have also been proposed~\cite{GGKNP12,GGKNP14,belazzougui2015composite,NIIBT16.2,BEGV18,CE18}. Some of their best results are
$\bigO(z\log(n/z) +z\log\log z)$ space with 
$\bigO(m+occ (\log\log n+\log^\epsilon z))$ query time \cite{CE18}, and
$\bigO(z\log(n/z))$ space with either 
$\bigO(m\log m+occ\log\log n)$ \cite{GGKNP14} or 
$\bigO(m+\log^\epsilon z + occ(\log\log n + \log^\epsilon z))$ 
\cite{CE18} query time. Gagie et al.~\cite{GNP18} give a more detailed survey.

The above-discussed compression schemes are the most popular, but not the most
space-efficient. More powerful compressors (NP-complete to optimize) include macro schemes~\cite{storer1982data} and collage systems~\cite{KidaMSTSA03}. Not much work exists in this direction, and no indexes are known for these particular compressors. 

\subsection{String attractors}

As seen in the previous paragraphs, the landscape of self-indexes based on dictionary compression~--- as well as that of dictionary compressors themselves~--- is extremely fragmented, with several techniques being developed for each distinct compression strategy. Very recently, Kempa and Prezza \cite{kempa2018roots} gathered all dictionary compression techniques under a common theory: they showed that  these algorithms are approximations to the smallest \emph{string attractor}, that is, a set of text positions ``capturing'' all distinct substrings of $S$.

\bigskip
\begin{definition}[String attractor~\cite{kempa2018roots}]\label{def: string attractor}
	A \emph{string attractor} of a string $S[1..n]$ is a set of $\gamma$ positions $\Gamma = \{j_1, \dots, j_\gamma\}$ such that every substring $S[i..j]$ has an occurrence $S[i'..j'] = S[i..j]$ with $j_k \in [i',j']$, for some $j_k\in\Gamma$. 
\end{definition}

Their main result is a set of reductions from dictionary compressors to string 
attractors of asymptotically the same size. % (and the other way round):

\bigskip
\begin{theorem}[\cite{kempa2018roots}]\label{th:reductions}
	Let $S$ be a string and let $\alpha$ be any of these measures: 
	\begin{enumerate}
		\item[(1)] the size $g$ of a CFG for $S$,
		\item[(2)] the size $g_{rl}$ of a run-length CFG for $S$,
		\item[(3)] the size $c$ of a collage system for $S$,
		\item[(4)] the size $z$ of the LZ77 parse of $S$,
		\item[(5)] the size $b$ of a macro scheme for $S$.
	\end{enumerate}
	Then, $S$ has a string attractor of size $\gamma=\bigO(\alpha)$. In all cases, the corresponding attractor can  be computed in $\bigO(|S|)$ time and space from the compressed representation.
\end{theorem}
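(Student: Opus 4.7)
The plan is to handle all five cases through a single template: for each compression scheme I would identify $\bigO(\alpha)$ ``seam'' positions of $S$---text positions that sit at the boundaries between the pieces used by the compressor inside one canonical occurrence of each unit---and then argue that every distinct substring $S[i..j]$ has at least one occurrence cut by such a seam. The cleanest instance is the LZ77 parse (part~4): I would place one attractor at the last position of each of the $z$ phrases. Given $S[i..j]$, look at its leftmost occurrence $S[i'..j']$; if it lay entirely inside one phrase, that phrase (being a copy of a strictly earlier source) would induce an occurrence of $S[i..j]$ to the left of $i'$, contradicting leftmostness. Hence $S[i'..j']$ must cross a phrase boundary.

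For a CFG of size $g$ (part~1) I would fix the parse tree of $S$ and, for every nonterminal $A$, pick one \emph{canonical} occurrence of $A$ in the tree; at each of the $|\mathrm{rhs}(A)|-1$ boundaries between children of that occurrence, mark the corresponding position of $S$. Summed over nonterminals this yields $\bigO(g)$ marks. Given $S[i..j]$, take the deepest parse-tree node $v$ whose expansion covers it; by minimality the range must cross some seam between children of $v$, and translating to the canonical occurrence of $v$'s label produces an occurrence of $S[i..j]$ containing a marked seam. For run-length CFGs (part~2), a rule $A\to X^k$ is handled by marking a single seam between the first two copies of $X$ in the canonical expansion and shifting ranges by multiples of $|X|$. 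Collage systems (part~3) additionally allow truncation on top of concatenation and repetition; I would mark the cut position inside the canonical occurrence of each truncation rule and apply the same deepest-node argument, keeping the total attractor size at $\bigO(c)$.

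The main obstacle is the macro-scheme case (part~5), because the compressor has no hierarchical structure: $S$ is partitioned into $b$ blocks, each either a literal character or a pointer to an equal-length substring elsewhere in $S$, with the sole constraint that the induced character-level dependency graph be acyclic. I would place one attractor at the end of each block, giving $\gamma \le b$. For a substring $S[i..j]$, if some occurrence crosses a block boundary we are done; otherwise every occurrence sits strictly inside a single pointer block, and the substring also occurs at the corresponding offset inside that block's source. Iterating, acyclicity forces this ``chase'' to terminate at an occurrence that either straddles a boundary or touches a literal block, both of which carry an attractor. The technical delicacy I would have to formalize carefully is that distinct positions inside the interval can follow pointers to different places, so the chase is really a joint operation on the whole range; I would argue termination using the maximum character-dependency depth appearing within the current occurrence as a strictly decreasing potential. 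In every case the seam positions can be read off directly from the compressed representation in a single $\bigO(n)$ traversal, yielding the claimed construction bound.
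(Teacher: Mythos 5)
This theorem is not proved in the paper you are reading: it is quoted verbatim from Kempa and Prezza~\cite{kempa2018roots}, so there is no ``paper's own proof'' to match against. Judged on its own merits, your general template --- mark $\bigO(\alpha)$ boundary/seam positions inside canonical occurrences and show every substring has an occurrence cut by one --- is the right spirit, and your LZ77 and macro-scheme arguments are essentially correct. (Two small points: for LZ77 you should observe that an occurrence lying entirely inside a phrase either sits strictly inside the copied part, in which case leftmostness is violated, \emph{or} it touches the phrase's final explicit character, which is itself one of your marked positions; and the macro-scheme ``chase'' should be phrased as following one occurrence step by step, with your depth potential dropping by exactly one per step, rather than as a statement about ``every occurrence.'')

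The genuine gap is part~(3), collage systems. Your plan is to ``mark the cut position inside the canonical occurrence of each truncation rule and apply the same deepest-node argument,'' but truncation breaks both halves of that sentence. First, a truncation node in the derivation has a single child covering the exact same interval of $S$, so there is no seam between siblings to cross; the deepest-node argument therefore only kicks in once you descend past all consecutive truncations to a concatenation or repetition node $v$. Second --- and this is the real obstruction --- the nonterminal $B$ labelling $v$ may \emph{never} occur fully expanded anywhere in $S$: every occurrence of $B$ in the derivation might sit under some truncation, each exposing a different sub-window of $B$'s virtual expansion, and those windows can be mutually incomparable (one occurrence exposes a short prefix, another a short suffix). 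Consequently there may be no single ``canonical'' occurrence of $B$ whose window contains the translated image of the substring you are trying to cover, and marking the seams of one chosen occurrence does not bound the attractor the way it does for plain CFGs, where every nonterminal's full expansion appears somewhere. Fixing this requires an idea your sketch does not contain: either a careful choice of occurrences per seam that still sums to $\bigO(c)$, or a reduction from collage systems to one of the already-handled formalisms (which is, in essence, how the cited work sidesteps the difficulty). Until that is supplied, part~(3) is unproved, and since you stated the five cases as instances of one template, this is the place where the template visibly fails.
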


Importantly, this implies that any data structure based on string attractors is \emph{universal}: given any dictionary-compressed text representation, we can induce a string attractor and build the data structure on top of it. Indeed, the authors  exploit this observation and provide the first universal data structure for random access, of size $\bigO(\gamma\log(n/\gamma))$. Their extraction time within this space is $\bigO(\log(n/\gamma)+\ell/\log_\sigma n)$. By using slightly more space, $\bigO(\gamma\log(n/\gamma)\log^\epsilon n)$ for any constant $\epsilon>0$, they obtain time $\bigO(\log(n/\gamma)/\log\log n + \ell/\log_\sigma n)$, which is the optimal that can be reached using any space in $\bigO(\gamma\,\textrm{polylog}\, n)$ \cite{kempa2018roots}. 
This suggests that compressed computation can be performed independently from the compression method used while at the same time matching the lower bounds of individual compressors (at least for some queries such as random access). 

\subsection{Our Contributions}

In this paper we exploit the above observation and describe {\em the first
universal self-index} based on string attractors, that is, the first indexing
strategy not depending on the underlying compression scheme. Since string
attractors stand at the core of the notion of compression, our result shows
that the relation between compression and indexing is much deeper than what
was previously thought: the simple string attractor property introduced in
Definition~\ref{def: string attractor} is sufficient to support indexed pattern searches.

\bigskip
\begin{theorem} \label{thm:main}
Let a string $S[1..n]$ have an attractor of size $\gamma$. Then, for any 
constant $\epsilon>0$, there exists a data structure of size 
$\bigO(\gamma \log(n/\gamma))$ that, given a pattern string $P[1..m]$, 
outputs all the $occ$ occurrences of $P$ in $S$ in time 
$\bigO(m\log n + occ (\log^\epsilon \gamma + \log\log(n/\gamma))) =
\bigO(m\log n + occ \log^\epsilon n)$.
%The structure also extracts any substring of length $\ell$ from $S$ in time
%$\bigO(\log(n/\gamma)+\ell)$.
It can be built in $\bigO(n+\gamma\lg(n/\gamma)\sqrt{\log \gamma})$ worst-case time and $\bigO(\gamma \log(n/\gamma))$ space with a Monte Carlo method returning a correct result with high probability. 
A guaranteed construction, using Las Vegas randomization, takes $\bigO(n\log n)$
expected time (this time also holds w.h.p.) and $\bigO(n)$ space.
\end{theorem}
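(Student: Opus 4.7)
The plan is to split occurrences of $P$ in $S$ into \emph{primary} occurrences, those whose range $[i,i+m-1]$ contains at least one attractor position $j_k\in\Gamma$, and \emph{secondary} occurrences, the rest. By Definition~\ref{def: string attractor}, every distinct substring of $S$ has at least one primary occurrence, so the set of primary positions already ``represents'' every pattern; all occurrences can then be recovered by first locating primary ones and then propagating through the copy pointers of an $\bigO(\gamma\log(n/\gamma))$-space block-tree-like decomposition of $S$ (for instance the Kempa--Prezza random-access structure, which provides both extraction and fingerprint access).

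Primary occurrences are located through two-dimensional range reporting. For each $j_k\in\Gamma$ I would pair its reversed left context $L_k^R = S[1..j_k]^R$ with its right context $R_k = S[j_k+1..n]$, sort the two sets independently, and map each $j_k$ to a point of a $\gamma\times\gamma$ grid. Then $P$ has an occurrence containing $j_k$ with $P[s]$ aligned on $S[j_k]$ iff the point for $j_k$ falls in the rectangle obtained by intersecting the lex-range of the reversed $P[1..s-1]$ among $\{L_k^R\}$ with the lex-range of $P[s..m]$ among $\{R_k\}$. Using an $\bigO(\gamma)$-space grid that reports $k$ points in a rectangle in $\bigO((1+k)\log^\epsilon\gamma)$ time accounts for the $occ\cdot\log^\epsilon\gamma$ term. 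The $m$ required lex-ranges are obtained by precomputing Karp--Rabin fingerprints of all prefixes and suffixes of $P$ in $\bigO(m)$ time, binary-searching the two sorted context lists, and extracting each substring fingerprint of $S$ through the random-access structure in $\bigO(\log(n/\gamma))$ per probe; amortizing over consecutive splits brings the total to $\bigO(m\log n)$.

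Secondary occurrences are enumerated by propagation through the copy pointers of the decomposition, which cuts $S$ at $\bigO(\log(n/\gamma))$ levels so that every non-primary block points to a region containing an attractor. A reverse-pointer two-dimensional structure indexed by source intervals, organized so that the branching per level is $\bigO(\gamma)$, can emit each secondary occurrence exactly once in $\bigO(\log\log(n/\gamma))$ amortized time, delivering the remaining $occ\cdot\log\log(n/\gamma)$ term. Correctness is by induction on the shallowest level at which an occurrence is entirely enclosed in a marked block: the primary grid supplies the base case, and the copy pointers lift each enclosed occurrence to all its copies.

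Construction assembles the block-tree in $\bigO(n)$ time, sorts and populates the grid in $\bigO(\gamma\log(n/\gamma)\sqrt{\log\gamma})$ using known integer-sorting bounds, and tunes the Karp--Rabin modulus so that collisions are absent with high probability (Monte Carlo); a Las~Vegas variant verifies each fingerprint through the random-access structure and retries on detection, giving $\bigO(n\log n)$ expected (and w.h.p.) time in $\bigO(n)$ space. The hardest step, I expect, is the secondary-occurrence machinery: only $\bigO(\gamma)$ primary points sit on the grid, but the secondaries can be asymptotically unbounded, so the propagation structure must meet $\bigO(\log\log(n/\gamma))$ per output within $\bigO(\gamma\log(n/\gamma))$ space; this requires a careful choice of which block-tree level ``owns'' each pattern length and an argument that each occurrence is emitted by exactly one level without double-counting across copy chains.
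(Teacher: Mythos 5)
Your high-level plan — primary occurrences via a two-dimensional range-reporting grid, secondary occurrences via propagation through block-tree copy pointers — is the paper's plan, but your choice of what counts as \emph{primary} introduces a completeness gap that the paper's $\Gamma$-tree construction is specifically designed to close.

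You define a primary occurrence as one whose interval contains some $j_k\in\Gamma$, and accordingly build a $\gamma\times\gamma$ grid with one point per attractor position. Definition~\ref{def: string attractor} does guarantee that every pattern has at least one such occurrence, but it does \emph{not} guarantee that every other occurrence can be reached from a primary one by following block pointers. Consider an occurrence that straddles the boundary between two consecutive blocks of the decomposition (at any level, and in particular between two adjacent unmarked leaves) without containing any attractor position. Such an occurrence carries no point in your grid; and because it is not contained in any single block, no pointer of the decomposition covers it, so the propagation step never emits it. Your induction on "the shallowest level at which the occurrence is entirely enclosed in a marked block" simply never applies to it. The paper avoids this by taking as primary the occurrences that cross a boundary between two consecutive \emph{explicit blocks} (the $w-1 = \bigO(\gamma\log(n/\gamma))$ leaf boundaries of the $\Gamma$-tree), so the grid has $w-1$ points, not $\gamma$. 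Lemma~\ref{lem:explicit} then gives the needed dichotomy: any length-$\ge 2$ occurrence either crosses a leaf boundary (caught by the grid) or lies wholly inside a single unmarked leaf (covered by exactly one pointer, with a level-based induction that terminates). This is also the reason the paper builds its own $\Gamma$-tree rather than reusing the attractor-aligned Kempa--Prezza structure you propose to invoke; the paper explicitly notes that structure is unsuitable for indexing, for essentially the reason above. A secondary point: a plain binary search over the sorted context lists would cost $\bigO(m\log\gamma\cdot\log(n/\gamma))$ to locate the $m$ ranges; the paper reaches $\bigO(m\log(mn/\gamma))$ by instead using a z-fast trie driven by Karp--Rabin fingerprints (Lemma~\ref{lem:ztrie}).
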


We remark that no representation offering random access within $o(\gamma\log(n/\gamma))$ space is known.
The performance of our index is close to that of the fastest self-indexes built on other repetitiveness masures, and it is the first one that can be built for macro schemes and collage systems. 

To obtain our results, we adapt the block tree index of 
Navarro~\cite{Navspire17}, which is designed for block trees on the LZ77
parse, to operate on string attractors. The result is also different from the 
block-tree-like structure Kempa and Prezza use for extraction 
\cite{kempa2018roots}, because that one is aligned
with the attractors and this turns out to be unsuitable for indexing. Instead,
we use a block-tree-like structure, which we dub $\Gamma$-tree, which partitions
the text in a regular form. We moreover introduce recent techniques 
\cite{GGKNP14} to remove the quadratic dependency on the pattern length in query
times.

\subsection{Notation}

We denote by $S[1..n] = S[1] \cdots S[n]$ a string of length $n$ over an
alphabet of size $\sigma = \bigO(n)$. Substrings of $S$ are denoted $S[i..j] =
S[i] \cdots S[j]$, and they are called prefixes of $S$ if $i=1$ and suffixes of
$S$ if $j=n$. The concatenation of strings $S$ and $S'$ is denoted
$S \cdot S'$.
We assume the RAM model of computation with a computer word of $\omega =
\Omega(\log n)$ bits. By $\lg$ we denote the logarithm function, to the base 2
when this matters.

The term \emph{with high probability} (\emph{w.h.p.} abbreviated) indicates with probability at least $1-n^{-c}$ for an arbitrarily large constant $c$, where $n$ is the input size (in our case, the input string length).

In our results, we make use of a modern variant of Karp--Rabin fingerprinting \cite{KR87} (more common nowadays than the original version), defined as follows. 
Let $q \ge \sigma$ be a prime number, and $r$ be a uniform number in $[1..q-1]$. The
fingerprint $\hat \phi$ of a string $S = S[1] \cdots S[n]\in [1..\sigma]^n$ is defined as $\hat\phi(S) = \sum_{i=0}^{n-1}S[n-i]\cdot r^i \mod q$. The \emph{extended fingerprint} of $S$ is the triple $\phi(S) = \langle \hat\phi(S), r^{|S|}\mod q, r^{-|S|}\mod q \rangle$. 
We say that $S\neq S'$, with $S,S'\in [1..\sigma]^n$, collide through $\phi$
(for our purposes, it will be sufficient to consider equal-length strings) if
$\hat\phi(S)=\hat\phi(S')$, that is, $\hat\phi(S-S')=0$, where $S''=S-S'$ is the string defined as $S''[i]=S[i]-S'[i] \mod q$.
Since $\hat\phi(S'')$ is a polynomial (in the variable $r$) of degree at most $n-1$ in the field
$\mathcal Z_q$, it has at most $n-1$ roots. As a consequence,
the probability of having a collision between two strings is bounded by
$\bigO(n/q)$ when $r$ is uniformly chosen in $[1..q-1]$. By choosing $q \in
\Theta(n^{c+2})$ for an arbitrarily large constant $c$, one obtains that such
a  hash function is collision-free among all equal-length substrings of a
given string $S$ of length $n$ w.h.p. To conclude, we will exploit the (easily provable) folklore fact that two extended fingerprints $\phi(U)$ and $\phi(V)$ can be combined in constant time to obtain the extended fingerprint $\phi(U V)$. Similarly, $\phi(U V)$ and $\phi(U)$ (respectively, $\phi(V)$) can be combined in constant time to obtain $\phi(V)$ (respectively, $\phi(U)$).
From now on, we will by default use the term ``fingerprint'' to mean extended
fingerprint.

\section{$\Gamma$-Trees}\label{sec:gamma-tree}

Given a string $S[1..n]$ over an alphabet $[1..\sigma]$ with an attractor
$\Gamma$ of size $\gamma$, we define a $\Gamma$-tree on $S$ as follows.
At the top level, numbered $l=0$, we split $S$ into $\gamma$ substrings
(which we call blocks) of length $b_0=n/\gamma$. Each block is then recursively
split into two so that if $b_l$ is the length of the blocks at level $l$, then it 
holds that $b_{l+1}=b_l/2$, until reaching blocks of one symbol after $\lg(n/\gamma)$
levels (that is, $b_l = n/(\gamma \cdot 2^l)$).\footnote{For simplicity of description, we assume that $n/\gamma$ is a power of $2$.} At each level $l$, every block that is at distance $<b_l$ from a
position $j \in \Gamma$ is {\em marked} (the distance between $j$ and a block 
$S[i..i']$ is $i-j$ if $i>j$, $j-i'$ if $i'<j$, and $0$ otherwise). 
Blocks $S[i..i']$ that are not marked are replaced by a pointer $\langle
ptr_1,ptr_2,\delta \rangle$ to an occurrence $S[j'..j'']$ of $S[i..i']$ that 
includes a position $j \in \Gamma$, $j' \le j \le j''$. Such an occurrence 
exists by Definition~\ref{def: string attractor}. Moreover, it must be covered by 1 or 2 consecutive 
marked blocks of the same level due to our marking mechanism, because all the 
positions in $S[j'..j'']$ are at distance $<b_l$ from $j$. Those 1 or 2 nodes of 
the $\Gamma$-tree are $ptr_1$ and $ptr_2$, and $\delta$ is the offset of $j'$
within $ptr_1$ ($\delta=0$ if $j'$ is the first symbol inside $ptr_1$).

%For every level $l \ge 0$, a bitvector
%$D_l$ with one bit per block sets to $1$ the positions of marked blocks. 
In level $l+1$ we explicitly store only the children of the blocks that were marked in 
level $l$. The blocks stored in the $\Gamma$-tree (i.e., all blocks at level 0 and those having a marked parent) are called \emph{explicit}.
%In this paper, this subdivision is carried out up to 
In the last level, the marked blocks store their corresponding single symbol
from $S$. 

\begin{figure}[t]
	\centering
	\includegraphics[scale=0.6, trim={0 16cm 5cm 0},clip]{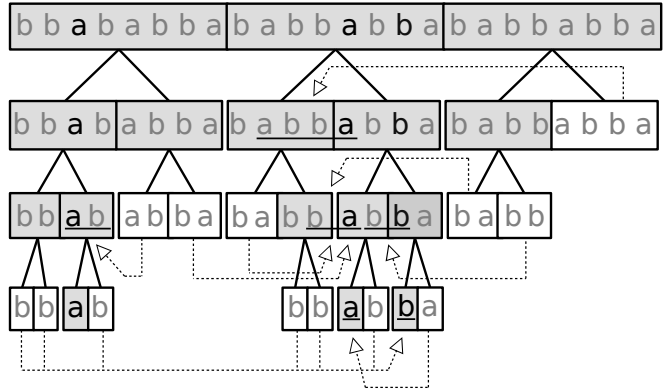}\caption{Example of a $\Gamma$-tree built on a text of length $n=24$ with $\gamma=3$ attractor positions (black letters). Marked blocks are colored in gray. Each non-marked block (in white) is associated with an occurrence (underlined) crossing an attractor position, and therefore overlapping only marked blocks. Only explicit blocks are shown.}\label{fig:gamma tree}
\end{figure}

See Figure \ref{fig:gamma tree} for an example of a $\Gamma$-tree.
We can regard the $\Gamma$-tree as a binary tree (with the first $\lg \gamma$ 
levels chopped out), where the internal nodes are marked nodes and have two
children, whereas the leaves either are marked and represent just one symbol,
or are unmarked and represent a pointer to 1 or 2 marked nodes of the same 
level. If we call $w$ the number of leaves, then there are $w-\gamma$ (marked) 
internal nodes. From the leaves, $\gamma$ of them represent single 
symbols in the last level, while the other $w-\gamma$ leaves are unmarked 
blocks replaced by pointers. Thus, there are in total $2w-\gamma$ nodes in the
tree, of which $w-\gamma$ are internal nodes, $w-\gamma$ are pointers, and 
$\gamma$ store explicit symbols. Alternatively, $w$ nodes are marked (internal
nodes plus leaves) and $w-\gamma$ are unmarked (leaves only).

The $\Gamma$-tree then uses $\bigO(w)$ space. To obtain a bound in terms of 
$n$ and $\gamma$, note that,
at each level, each $j \in \Gamma$ may mark up to $3$ blocks; thus there 
are $w \le 3\gamma\lg(n/\gamma)$ marked blocks in total and the 
$\Gamma$-tree uses $\bigO(\gamma\lg(n/\gamma))$ space.

We now describe two operations on $\Gamma$-trees that are fundamental to
support efficient indexed searches. The former is also necessary for a 
self-index, as it allows us extracting arbitrary substrings of $S$ efficiently.
We remind that this procedure is not the same described on the original 
structure of string attractors~\cite{kempa2018roots}, because the structures are also different.

\subsection{Extraction} \label{sec:extract}

To extract a single symbol $S[i]$, we first map it to a local offset $1 \le
i' \le b_0$ in its corresponding level-$0$ block. In general, given the local 
offset $i'$ of a character in the  current block at level $l$, we first see if 
the current block is marked. 
If so, we map $i'$ to a position in the next level $l+1$, 
where the current block is split into two blocks of half the length:
if $i' \le b_{l+1}$, then we continue on the left child with the same offset;
otherwise, we subtract $b_{l+1}$ from $i'$ and continue on the
right child.
If, instead, $i$ is not in a marked block, we take the pointer
$\langle ptr_1,ptr_2,\delta\rangle$ stored for that block and add $\delta$ to
$i'$. If the result is $i' \le  b_l$, then we continue in the
node $ptr_1$ with offset $i'$; otherwise, we continue in $ptr_2$ with the
offset $i' -  b_l$. In both cases, the new node is marked, so
we proceed as on marked blocks in order to move to the next level in 
constant time.
The total time to extract a symbol is then $\bigO(\log(n/\gamma))$.

A substring of length $\ell$ can thus be extracted in time $\bigO(\ell
\log(n/\gamma))$, which will be sufficient to obtain the search time complexity
of Theorem~\ref{thm:main}.
It is possible to augment $\Gamma$-trees to match the complexity 
$\bigO(\log(n/\gamma)+\ell/\log_\sigma n)$ obtained by
Kempa and Prezza \cite{kempa2018roots} on string attractors as well, though this
would have no impact on our results.
%In Section~\ref{sec:fastextract} we show how to augment the $\Gamma$-tree in
%order to obtain the faster extraction time claimed in Theorem~\ref{thm:main}.

\subsection{Fingerprinting} \label{sec:finger}

We now show that the $\Gamma$-tree can be augmented to compute the Karp--Rabin fingerprint of any text substring in logarithmic time. 

%We recall the folklore fact that the Karp--Rabin fingerprint of the concatenation $UV$ of two strings $U$ and $V$ can be efficiently computed from the fingerprints of $U$ and $V$ and the length of $V$ as $\phi(UV) = \phi(U)\cdot \sigma^{|V|} + \phi(V) \mod q$, where $q$ is the prime used in the Karp--Rabin hash function $\phi$. Importantly, note that this computation takes constant time only if $\sigma^{|V|}\mod q$ is known beforehand; otherwise, computing the exponential alone takes $\bigO(\log |V|)$ time using the fast exponentiation algorithm. For this reason, in the proof of Lemma \ref{lemma:KR} we stress out that we can store a subset of pre-computed exponentials of the form $\sigma^{x}\mod q$ (and compute on the fly the remaining ones) so that we do not incur in this overhead at query-time. For the same reason, we prefer to stick to Karp--Rabin's original technique of using a random prime $q$ instead of the more modern one of drawing a random base $r$ (which would introduce more unnecessary notation).

\bigskip
\begin{lemma}\label{lemma:KR}
Let $S[1..n]$ have an attractor of size $\gamma$ and $\phi$ a Karp--Rabin
fingerprint function. Then we can store a data structure of size 
$\bigO(\gamma\log(n/\gamma))$ words supporting the computation of $\phi$ on
any substring of $S$ in $\bigO(\log(n/\gamma))$ time. 
\end{lemma}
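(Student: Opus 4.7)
The plan is to augment every explicit block $B$ of the $\Gamma$-tree with its extended Karp--Rabin fingerprint $\phi(B)$, and to additionally store at the root a cumulative fingerprint table over the $\gamma$ level-$0$ blocks. Using the constant-time composition and subtraction rules for extended fingerprints, the cumulative table lets us retrieve $\phi$ of any contiguous range of full level-$0$ blocks in $\bigO(1)$. This adds $\bigO(1)$ words per explicit block, so the total space is $\bigO(\gamma\log(n/\gamma))$, as required.

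To answer a query for $\phi(S[i..j])$, I first locate (via one extract-style descent) the level-$0$ blocks $B_a$ and $B_c$ containing $i$ and $j$. If $a<c$, the query decomposes into three pieces whose fingerprints I combine in $\bigO(1)$: a \emph{suffix} query on $B_a$, the fingerprint of the intermediate range $B_{a+1},\ldots,B_{c-1}$ (obtained from the cumulative table in $\bigO(1)$), and a \emph{prefix} query on $B_c$. If $a=c$, the query reduces to a single \emph{infix} query on $B_a$. It remains to show that prefix/suffix/infix queries on any explicit block can be answered in $\bigO(\log(n/\gamma))$ time.

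These block-local queries are solved by a one-level-per-step recursion. For a marked block at level $l$ with children $L,R$ at level $l+1$: a prefix query either recurses as a prefix on $L$, or combines the stored $\phi(L)$ with a prefix query on $R$; suffix queries are symmetric; an infix query is either an infix on $L$, an infix on $R$, or a suffix on $L$ combined with a prefix on $R$. For an unmarked leaf at level $l$, I use its pointer $\langle ptr_1,ptr_2,\delta\rangle$ to rewrite the requested substring (of length $t$ starting at some local offset $i'$) as an infix on $ptr_1$, or as a suffix on $ptr_1$ combined with a prefix on $ptr_2$; here $ptr_1,ptr_2$ are marked level-$l$ blocks, so the next step descends to level $l+1$. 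Each such reduction is $\bigO(1)$ work and either descends one level or is immediately followed by a reduction that does.

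The point that needs care is bounding the number of simultaneous sub-queries so that the recursion stays linear in the depth. Prefix and suffix queries each form a single chain (they never spawn a second pending query, since any complete child contributes only a stored fingerprint). An infix query can split into a suffix/prefix pair at most once, either when its marked block is subdivided or when an unmarked block is rerouted through its pointer; after splitting, the two resulting chains are prefix/suffix chains and therefore do not branch further. Hence the total number of outstanding sub-queries is $\bigO(1)$ at every level, and each chain descends one level per $\bigO(1)$ work, giving $\bigO(\log(n/\gamma))$ total time. Composing the pieces with the extended-fingerprint combine rule yields $\phi(S[i..j])$ within the claimed bounds.
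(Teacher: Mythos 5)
Your top-level reduction (level-$0$ cumulative table, then prefix/suffix/infix queries on single explicit blocks) matches the spirit of the paper's, and the marked-block case is handled correctly. The gap is in the claim that ``prefix and suffix queries each form a single chain ... and therefore do not branch further.'' That claim is true only on \emph{marked} blocks. When a prefix (or suffix) query $B[1..k]$ reaches an \emph{unmarked} block $B$ and you reroute via $\langle ptr_1, ptr_2, \delta\rangle$, the image of $B[1..k]$ in the source starts at offset $\delta$ inside $ptr_1$. Unless $\delta=0$, this image is either an \emph{infix} of $ptr_1$ (or of $ptr_2$), or it straddles $ptr_1$ and $ptr_2$ and \emph{splits} into a suffix/prefix pair --- exactly the two behaviors you correctly identify for an infix query. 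So a ``prefix chain'' does not stay a prefix chain: after the first unmarked block with $\delta\neq 0$ it turns into an infix (or splits), and the same happens again at the next unmarked block down. Nothing in your structure caps the number of outstanding sub-queries at $\bigO(1)$; in the worst case it doubles at every level, giving $\Theta(n/\gamma)$ work rather than $\bigO(\log(n/\gamma))$.

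What you are missing is one extra stored word per pointer. The paper stores, for each unmarked block whose source is $B'[i..b_l]\,B''[1..i-1]$, the fingerprint $\phi(B'[i..b_l])$ of the portion of the source that lies in the first covering marked block. With that value in hand, a prefix query on the unmarked block never turns into an infix: if $k \ge b_l-i+1$, peel off the stored $\phi(B'[i..b_l])$ and continue with a \emph{prefix} of $B''$; if $k < b_l-i+1$, observe $B[1..k]\cdot B'[i+k..b_l] = B'[i..b_l]$, so $\phi(B[1..k])$ follows from the stored $\phi(B'[i..b_l])$ and $\phi(B'[i+k..b_l])$, and the latter (a suffix) reduces via the stored block fingerprint $\phi(B')$ to a \emph{prefix} of $B'$. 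Either way the query is re-normalized to a prefix of a marked block, the chain never branches, and the $\bigO(\log(n/\gamma))$ bound goes through. Without this per-pointer fingerprint your recursion is not controlled, so the proof as written does not establish the lemma.
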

\begin{proof}
	We augment our $\Gamma$-tree of $S$ as follows. At level $0$, we store the Karp--Rabin fingerprints of all the text prefixes ending at positions $i\cdot n/\gamma$, for $i=1, \dots, \gamma$. 
	At levels $l>0$, we store the fingerprints of all explicit blocks. 		

We first show that we can reduce the problem to that of computing the fingerprints of two prefixes of explicit blocks. Then, we show how to solve the sub-problem of computing fingerprints of prefixes of explicit blocks.
	
	Let $S[i..j]$ be the substring of which we wish to compute the
fingerprint $\phi(S[i..j])$. Note that $\phi(S[i..j])$ can be
computed in constant time from $\phi(S[1..i-1])$ and $\phi(S[1..j])$ so we can
assume, without loss of generality, that $i=1$ (i.e., the substring is a prefix of $S$). Then, at level 0 the substring spans a sequence $B_1\cdots B_t$ of blocks followed by a prefix $C$ of block $B_{t+1}$ (the sequence of blocks or $C$ could be empty). The fingerprint of $B_1\cdots B_t$ is explicitly stored, so the problem reduces to that of computing the fingerprint of $C$. 
	%Note that combining the fingerprints of $D$, $B_i\cdots B_j$, and $C$ requires computing $\sigma^{|B_i\cdots B_j|}\mod q$ and $\sigma^{|C|}\mod q$. The first of these two values is sampled. Since $|C|\leq n/\gamma$, the second value can be computed in $\bigO(\log|C|)=\bigO(\log(n/\gamma))$ time using the fast exponentiation algorithm. 
	%In case (b), we have two sub-cases: the substring (b.1) spans exactly two blocks or is the prefix/suffix of a block, or (b.2) it is properly contained in a block at level 0. In case (b.1) we are already done, as the substring can be decomposed into a suffix $D$ and a prefix $C$ of  an explicit block. In case (b.2), we simply use the $\Gamma$-tree structure to map the substring to lower levels until we are back in case (b.1).
	
	We now show how to compute the fingerprint of a prefix of  an explicit block (at any level) in $\bigO(\log(n/\gamma))$ time. We distinguish two cases.
	
	(A) We wish to compute the fingerprint of $B[1..k]$, for some $k\leq b_l$, and $B$ is a marked block at level $l$. Let $\Bleft$ and $\Bright$ be the children of $B$ at level $l+1$.
	Then, the problem reduces to either (i) computing the fingerprint of
$\Bleft[1..k]$ if $k\leq b_{l+1}$, or combining the fingerprints of $\Bleft$
(which is stored) and $\Bright[1..k-b_{l+1}]$. In both sub-cases, the problem reduces to that of computing the fingerprint of the prefix of a block at level $l+1$, which is explicit since $B$ is marked. 
	
	(B) We wish to compute the fingerprint of $B[1..k]$, for some $k\leq b_l$, but $B$ is an unmarked explicit block. Then, $B$ is linked (through a $\Gamma$-tree pointer) to an occurrence in the same level spanning at most two blocks, both of which are marked. If the occurrence of $B$ spans only one marked block $B'$ at level $l$, then $B[1..b_l] = B'[1..b_l]$ and we are back in case (A). Otherwise, the occurrence of $B$ spans two marked blocks $B'$ and $B''$ at level $l$: $B[1..b_l] = B'[i..b_l]B''[1..i-1]$, with $i\leq b_l$. For each pointer of this kind in the $\Gamma$-tree, we store the fingerprint of $B'[i..b_l]$. We consider two sub-cases. (B.1) If $k\geq b_l-i+1$, then $B[1..k] = B'[i..b_l]B''[1..k-(b_l-i+1)]$. Since we store the fingerprint of $B'[i..b_l]$, the problem reduces again to that of computing the fingerprint of the prefix $B''[1..k-(b_l-i+1)]$ of a marked (explicit) block. 
	%Note that, to combine the fingerprints of $B'[i..b_l]$ and $B''[1..k-(b_l-i+1)]$, we need the exponential $\sigma^{k-(b_l-i+1)}\mod q$. This value can be easily computed recursively by making sure that our recursive calls return not only fingerprints, but also the associated exponentials modulo $q$. 
	(B.2) If $k < b_l-i+1$, then $B[1..k] = B'[i..i+k-1]$. Although this
is not a prefix nor a suffix of a block, note that $B[1..k]B'[i+k..b_l] =  B'[i..i+k-1]B'[i+k..b_l] = B'[i..b_l]$. It follows that we can retrieve the fingerprint of $B[1..k]$ in constant time using the fingerprints of $B'[i+k..b_l]$ and $B'[i..b_l]$. The latter value is explicitly stored. The former is the fingerprint of the suffix of an explicit (marked) block. 
	In this case, note that the fingerprint of a block's suffix can be retrieved from the fingerprint of the block and the fingerprint of a block's prefix, so we are back to the problem of computing the fingerprint of an explicit block's prefix.

	To sum up, computing a prefix of an explicit block at level $l$
reduces to the problem of computing a prefix of an explicit block at level
$l+1$ (plus a constant number of arithmetic operations to combine values). In
the worst case, we navigate down to the leaves, where fingerprints of single characters can be computed in constant time. Combining this procedure into our main algorithm, we obtain the claimed running time of $\bigO(\log(n/\gamma))$. 
\end{proof}

\section{A Universal Self-Index}

Our self-index structure builds on the $\Gamma$-tree of $S$. It is formed by 
two main components: the first finds all the pattern occurrences that cross 
explicit block boundaries, whereas the second finds the occurrences that are
completely inside unmarked blocks. 

\bigskip
\begin{lemma} \label{lem:explicit}
Any substring $S[i..j]$ of length at least $2$ either overlaps two consecutive 
explicit blocks or is completely inside an unmarked block.
\end{lemma}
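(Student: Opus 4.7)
The plan is to follow $S[i..j]$ down the $\Gamma$-tree and then resolve the claim with a single case split. Concretely, I would let $L$ be the deepest level at which $S[i..j]$ is wholly contained in some explicit block $B$; everything then reduces to the status (marked or unmarked) of $B$.

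Before the case analysis I would check that $L$ is well defined. If $S[i..j]$ is not contained in any single level-$0$ block, then, since all level-$0$ blocks are explicit and partition $S$, $S[i..j]$ must overlap two consecutive level-$0$ blocks, yielding the first alternative immediately. Otherwise $L \geq 0$. The hypothesis $j-i+1 \geq 2$ keeps $L$ away from the bottom of the tree: at the last level, blocks have length $1$, so no block there can contain a substring of length at least $2$. Hence at level $L$ we always have $b_L \geq 2$ and level $L+1$ exists.

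For the case split, let $B$ be the explicit block at level $L$ containing $S[i..j]$. If $B$ is unmarked, then $S[i..j]$ lies inside an unmarked block, which is the second alternative. If $B$ is marked, then by the construction of the $\Gamma$-tree its two children at level $L+1$ are both explicit and are adjacent in $S$. By maximality of $L$, $S[i..j]$ cannot fit inside either child alone, so it must straddle them, exhibiting two consecutive explicit blocks and giving the first alternative.

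I expect the only delicate point to be ensuring that the two ``consecutive explicit blocks'' promised by the first alternative really sit at the same level of the tree, so that ``consecutive'' makes unambiguous sense in $S$. This is exactly what is bought by descending to the \emph{deepest} containing explicit block rather than stopping at the first marked ancestor: the structural fact that explicit level-$(l{+}1)$ blocks arise as sibling pairs under marked level-$l$ parents guarantees that when $S[i..j]$ escapes such a parent one level down, the two witnesses it produces are automatically adjacent and of equal length.
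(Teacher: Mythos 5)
Your proof is correct, but it takes a longer route than the paper's. The paper's proof is a two-sentence observation: the leaves of the $\Gamma$-tree already partition $S$ into explicit blocks (the unmarked leaves together with the single-symbol marked leaves at the last level); since marked leaves have length $1$, a substring of length at least $2$ that does not cross a boundary of this leaf partition must lie entirely inside an unmarked leaf, and otherwise it crosses a boundary between two consecutive leaves. This is exactly the form used in Section~4.1, where ``consecutive explicit blocks'' means consecutive leaves in the partition $p_1 < \cdots < p_w$~--- these leaves need not be at the same tree level, and adjacency in $S$ is what the index exploits, so the delicacy you flag about ensuring same-level consecutiveness is not actually required by the lemma or its use. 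Your tree-descent argument is sound and in fact gives slightly more structure (two equal-length sibling blocks at some level $L{+}1$ straddled by the substring); since crossing their shared boundary also crosses a leaf boundary, your conclusion implies the paper's, but the extra bookkeeping of tracking the deepest containing explicit level buys nothing downstream and obscures how simple the partition argument is.
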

\begin{proof}
The leaves of the $\Gamma$-tree partition $S$ into a sequence of explicit
blocks: $\gamma$ of those are attractor positions and the other $w-\gamma$
are unmarked blocks. Clearly, if $S[i..j]$ is not completely inside an
unmarked block, it must cross a boundary between two explicit blocks.
%If $S[i..j]$ crosses the boundary between two blocks at level $l=0$, then we 
%are done because all those nodes are explicit. Otherwise, let $v$ be the 
%deepest node of the $\Gamma$-tree that contains $S[i..j]$.  Then, either
%$v$ is a leaf or $v$ has two children that split $S[i..j]$. In the former case,
%$v$ cannot be at the last level because it contains a substring of length at
%least 2. Therefore, it must be an unmarked block that is replaced by a pointer.
%In the latter case, $v$ is marked and thus its two children are explicit.
%In both cases, the lemma holds.
\end{proof}

We exploit the lemma in the following way. We will define an occurrence of $P$
as {\em primary} if it overlaps two consecutive explicit blocks. 
The occurrences that are completely contained in an unmarked block are {\em
secondary}. By the lemma, every occurrence of $P$ is either primary or 
secondary. We will use a data structure to find the primary occurrences and 
another one to detect the secondary ones. The primary occurrences are found by
exploiting the fact that a prefix of $P$ matches at the end of an explicit 
block and the remaining suffix of $P$ matches the text that follows.
Secondary occurrences,
instead, are found by detecting primary or other secondary occurrences within 
the area where an unmarked block points. 

We note that this idea is a variant of the classical one \cite{KU96} used in 
all indexes based on LZ77 and CFGs. Now we show that the principle 
can indeed be applied on attractors, which is the general concept underlying 
all those compression methods (and others where no indexes exist yet), and 
therefore unifies all those particular techniques.

\subsection{Primary Occurrences}
\label{sec:primary}

We describe the data structures and algorithms used to find the primary
occurrences. Overall, they require $\bigO(w)$ space and
find the $occ_p$ primary occurrences in time $\bigO(m\log(mn/\gamma) + 
occ_p \log^\epsilon w)$, for any constant $\epsilon>0$.

\paragraph{Data Structures.}

The leaves of the $\Gamma$-tree partition
$S$ into explicit blocks. The partition is given by the starting positions
$1=p_1 < \ldots < p_w \le n$ of the leaves.
%explicit blocks: each of the $\gamma$ blocks of level $0$ contributes its 
%starting point, and each of the $w-\gamma$ internal nodes contributes the 
%starting point of its right child. 
By Lemma~\ref{lem:explicit}, every primary occurrence contains some substring
$S[p_i-1..p_i]$ for $1 < i \le w$. 

If we find the occurrences considering only their leftmost covered position
of the form 
$p_i-1$, we also ensure that each primary occurrence is found once. Thus, we
will find primary occurrences as a prefix of $P$ appearing at the end of
some $S[p_{i-1}..p_i-1]$ followed by the corresponding suffix of $P$ appearing
at the beginning of $S[p_i..n]$.
For this sake, we define the set of pairs $\mathcal{B} = 
\{ \langle S[p_i..n], S[p_{i-1}..p_i-1]^{rev}\rangle,~1 < i \le w \}$, where
$S[p_{i-1}..p_i-1]^{rev}$ means $S[p_{i-1}..p_i-1]$ read backwards, and
form multisets $\mathcal{X}$ and $\mathcal{Y}$ with the left and
right components of $\mathcal{B}$, respectively.

We then lexicographically sort $\mathcal{X}$ and $\mathcal{Y}$, to obtain the strings $X_1, X_2, 
\ldots$ and $Y_1, Y_2, \ldots$. All the occurrences ending with a certain prefix
of $P$ will form a contiguous range in the sorted multiset $\mathcal{Y}$, whereas all 
those starting with a certain suffix of $P$ will form a contiguous range in 
the sorted 
multiset $\mathcal{X}$. Each primary occurrence of $P$ will then correspond to a pair 
$\langle X_x,Y_y \rangle \in \mathcal{B}$ where both $X_x$ and $Y_y$ belong to 
their range.

Our structure to find the primary occurrences is a two-dimensional discrete 
grid $G$ storing one point $(x,y)$ for each pair $\langle X_x,Y_y \rangle
\in \mathcal{B}$.
The grid $G$ is of size $(w-1) \times (w-1)$.
We represent $G$ using a two-dimensional range search data structure 
requiring $\bigO(w)$ space \cite{CLP11} that reports 
%wavelet tree \cite{GGV03,GRR08,Nav14}, so that it takes $w \lg w + o(w)$ bits 
%and can report all the $y$-coordinates of 
the $t$ points lying inside any rectangle of the grid in time 
$\bigO((t+1)\log^\epsilon w)$, for any constant $\epsilon>0$.
We also store an array $T[1..w-1]$ that, for each point $(x,y)$ in $G$,
where $X_x = S[p_i..n]$, stores $T[y] = p_i$, that is, where $X_x$ starts in
$S$.

\paragraph{Queries.}
To search for a pattern $P[1..m]$, we first find its primary occurrences using
$G$ as follows. For each partition $P_< = P[1..k]$ and $P_> = P[k+1..m]$, for
$1 \le k < m$, we search $\mathcal{Y}$ for $P_<^{rev}$ and $\mathcal{X}$ for 
$P_>$. For each identified range $[x_1,x_2] \times [y_1,y_2]$, we extract all
the $t$ corresponding primary occurrences $(x,y)$ in time 
$\bigO((t+1)\log^\epsilon w)$ with our range search data structure. Then we
report a primary occurrence starting at $T[y]-k$ for each such point $(x,y)$.
Over the $m$ intervals, this adds up to $\bigO((m+occ_p)\log^\epsilon w) =
\bigO((m+occ_p)\log^\epsilon(\gamma\log(n/\gamma))$.

We obtain the $m-1$ ranges in the multiset $\mathcal{X}$ in overall time 
$\bigO(m\log(mn/\gamma))$, by using the fingerprint-based technique of Gagie 
et al.~\cite{GGKNP14} applied to the z-fast trie of Belazzougui et al.~\cite{belazzougui2010fast} (we use a lemma from Gagie et al.~\cite{GNP18} where the overall result is stated in cleaner form). We use an analogous structure to obtain the 
ranges in $\mathcal{Y}$ of the suffixes of the reversed pattern.

\bigskip
\begin{lemma}[{adapted from \cite[Lem.\ 5.2]{GNP18}}] \label{lem:ztrie}
Let $S[1..n]$ be a string on alphabet $[1..\sigma]$, $\mathcal{X}$ be a sorted 
set of suffixes of $S$, and $\phi$ a Karp--Rabin fingerprint function.
If one can extract a substring of length $\ell$ from $S$ in time $f_e(\ell)$ 
and compute $\phi$ on it in time $f_h(\ell)$, then one 
can build a data structure of size $\bigO(|\mathcal{X}|)$ that obtains
the lexicographic ranges in $\mathcal{X}$ of the $m-1$ suffixes of a given
pattern $P$ in worst-case time $\bigO(m\log(\sigma)/\omega + m(f_h(m)+\log
m)+f_e(m))$~--- provided that $\phi$ is collision-free among substrings of $S$ whose lengths are powers of two.
\end{lemma}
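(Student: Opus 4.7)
The approach is to instantiate the z-fast trie of Belazzougui et al.~\cite{belazzougui2010fast}, coupled with the fingerprint-based navigation of Gagie et al.~\cite{GGKNP14,GNP18}, using the extractor and fingerprinter of our $\Gamma$-tree as black boxes. I would build, in $\bigO(|\mathcal{X}|)$ space, a z-fast trie over the strings of $\mathcal{X}$; each node of the trie stores Karp--Rabin fingerprints of power-of-two-length substrings of $S$, accessible in $\bigO(1)$ time. Upon receiving $P$, I would read it in packed form in $\bigO(m\log\sigma/\omega)$ time and compute, in a single Horner pass, the Karp--Rabin fingerprints of all its prefixes in $\bigO(m)$ time, so that the fingerprint of any substring of $P$ becomes available in $\bigO(1)$.

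For each of the $m-1$ suffixes $P[k..m]$, I would perform one z-fast trie query, locating the lexicographic rank of the suffix in $\mathcal{X}$ by a ``fat binary search'' that compares trie-stored $S$-fingerprints ($\bigO(1)$ per access) against $P$-fingerprints ($\bigO(1)$ per access, from the precomputation). Each query costs $\bigO(\log m)$ time for navigation, plus $\bigO(f_h(m))$ for one fingerprint computation on $S$ needed to confirm the exit node, giving the $\bigO(m(f_h(m)+\log m))$ bulk term. The z-fast exit node is only a candidate, and converting it into the true lexicographic range ordinarily requires a character-level comparison; I would share this cost across all $m-1$ queries by extracting a single $\bigO(m)$-length substring of $S$, namely the one corresponding to the longest matched suffix, at cost $f_e(m)$.

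The main obstacle is arguing that a single extraction suffices to certify all $m-1$ ranges. The reason is structural: the z-fast trie makes only fingerprint comparisons on power-of-two lengths, so the hypothesis that $\phi$ is collision-free on such lengths guarantees that fingerprint equality coincides with string equality throughout navigation. Thus the single extraction resolves the residual ``off-by-LCP'' information that fingerprints alone cannot supply, while the remaining query answers are inherited from fingerprint identities. Modulo this observation, the statement is a direct restatement of~\cite[Lem.~5.2]{GNP18} with $f_e(\ell)$ and $f_h(\ell)$ provided by the underlying $\Gamma$-tree.
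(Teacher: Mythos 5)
First, note that the paper does \emph{not} prove this lemma; it is imported as a black box from \cite[Lem.~5.2]{GNP18} and the subsequent text only instantiates $f_e$ and $f_h$ from the $\Gamma$-tree. So there is no in-paper proof to match against, and your task is to reconstruct the cited argument. Your reconstruction has the right components~--- a z-fast trie over $\mathcal{X}$, precomputed fingerprints of all prefixes of $P$, an $\bigO(\log m)$ fat binary search per suffix, one $f_h(m)$-time $S$-fingerprint per suffix, and a single $f_e(m)$-time extraction~--- and all of these match the stated running time. The gap is in the step you yourself flag as ``the main obstacle,'' namely why one extraction certifies all $m-1$ ranges.

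Your justification is circular. You write that ``the z-fast trie makes only fingerprint comparisons on power-of-two lengths, so the hypothesis that $\phi$ is collision-free on such lengths guarantees that fingerprint equality coincides with string equality throughout navigation,'' and you then conclude that ``the remaining query answers are inherited from fingerprint identities.'' But the collision-freedom hypothesis is only \emph{among substrings of $S$}. Every comparison made during z-fast navigation and during the per-suffix verification is between a fingerprint of a substring of $P$ and a fingerprint of a substring of $S$. That comparison is only covered by the hypothesis when the $P$-substring is in fact equal to some $S$-substring~--- which is exactly the fact you are trying to decide. For a suffix $P[k..m]$ that does \emph{not} occur in $S$, the z-fast trie's weak prefix search may return an arbitrary node and the fingerprint check against its label may spuriously pass; the hypothesis gives you no protection there, and a single extraction for one chosen $k$ does not, by itself, propagate any certification to the other $m-2$ suffixes.

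The missing structural observation is monotonicity over suffix lengths: if $P[k..m]$ occurs in $S$, then every shorter suffix $P[k'..m]$, $k'>k$, also occurs in $S$. So let $k^*$ be the smallest $k$ whose fingerprint check passed, extract the $\bigO(m)$-length candidate substring of $S$ for that one suffix, and compare it to $P[k^*..m]$. If it matches, then $P[k^*..m]$~--- and hence every shorter suffix~--- is a substring of $S$; only at this point do all the remaining per-suffix fingerprint comparisons become comparisons between two $S$-substrings, so the power-of-two collision-freedom (via the usual prefix/suffix-of-power-of-two-length trick) makes them deterministically correct, while the longer suffixes already failed and therefore have empty ranges. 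Without invoking this upward-closedness and making the extraction step confer ``$S$-substring status'' on the shorter suffixes, the claim that one extraction is enough is unsupported.
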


In Sections~\ref{sec:extract} and \ref{sec:finger} we show how to
extract in time $f_e(\ell) = \bigO(\ell \lg(n/\gamma))$ and how to compute a
fingerprint in time $f_h(\ell) = \bigO(\lg(n/\gamma))$, respectively.  
In Section \ref{sec:building fingerprints} we show that a Karp--Rabin function that is collision-free  among substrings whose lengths are powers of two can be efficiently found. Together, these results 
show that we can find all the ranges in $\mathcal{X}$ and $\mathcal{Y}$ in time
$\bigO(m\lg(\sigma)/\omega + m(\lg(n/\gamma)+\lg m)+m\lg(n/\gamma)) =
\bigO(m\lg(mn/\gamma))$.

Patterns $P$ of length $m=1$ can be handled as $P[1]*$, where $*$ stands for
any character. Thus, we take $[x_1,x_2]=[1,w]$ and carry out the search as a 
normal pattern of length $m=2$. To make this work also for the last position 
in $S$, we assume as usual that $S$ is terminated by a special symbol \$ that cannot appear in search patterns $P$. Alternatively, we can store a list of the
marked leaves where each alphabet symbol appears, and take those as the primary
occurrences. A simple variant of Lemma~\ref{lem:explicit} shows that 
occurrences of length 1 are either attractor positions or are inside an 
unmarked block, so the normal mechanism to find secondary occurrences from 
this set works correctly. Since there are in total $\gamma$ such leaves, the 
space for these lists is $\bigO(\sigma+\gamma)=\bigO(\gamma)$.

\subsection{Secondary Occurrences}

We now describe the data structures and algorithms to find the secondary
occurrences. They require $\bigO(w)$ space and
find the $occ_s$ secondary occurrences in time 
$\bigO((occ_p+occ_s) \log\log_\omega (n/\gamma))$.

\paragraph{Data structures.}
To track the secondary occurrences, let us call {\em target} and {\em source} 
the text areas $S[i..i']$ and $S[j'..j'']$, respectively, of an unmarked
block and its pointer, so that there is some $j \in \Gamma$ contained in 
$S[j'..j'']$ (if the blocks are at level $l$, then $i'=i+b_l-1$ and $j''=j'+
b_l-1$). Let $S[pos..pos+m-1]$ be an 
occurrence we have already found (using the grid $G$, initially). Our aim is
to find all the sources that contain $S[pos..pos+m-1]$, since their 
corresponding targets then contain other occurrences of $P$.

To this aim, we store the sources of all levels in an array $R[1..w-\gamma]$,
with fields $j'$ and $j''$, ordered by starting positions $R[k].j'$. 
We build a predecessor search structure on the fields $R[k].j'$, and a range 
maximum query (RMQ) data structure on the fields $R[k].j''$, able to find the 
maximum endpoint $j''$ in any given range of $R$. While a predecessor search 
using $\bigO(w)$ space requires $\bigO(\log\log_\omega (n/w))$ time on an
$\omega$-bit-word machine \cite[Sec. 1.3.2]{PT06}, the RMQ structure operates in constant 
time using just $\bigO(w)$ bits \cite{FH11}.

\paragraph{Queries.}
Let $S[pos..pos+m-1]$ be a primary occurrence found.
A predecessor search for $pos$ gives us the rightmost position $r$ where the 
sources start at $R[r].j' \le pos$. An RMQ on $R[1..r]$ then finds the position
$k$ of the source with the rightmost endpoint $R[k].j''$ in $R[1..r]$. If even 
$R[k].j'' < pos+m-1$, then no source covers the occurrence and we finish. If, 
instead, $R[k].j'' \ge pos+m-1$, then the source $R[k]$ covers the occurrence 
and we process its corresponding target as a secondary occurrence; in this case
we also recurse on the ranges $R[1..k-1]$ and $R[k+1..r]$ that are nonempty. 
It is easy to see that each valid secondary occurrence is identified in 
$\bigO(1)$ time (see Muthukrishnan \cite{Mut02} for an analogous process).
In addition, such secondary occurrences, $S[pos'..pos'+m-1]$, must be
recursively processed for further secondary occurrences.
A similar procedure is described for tracking the secondary occurrences in 
the LZ77-index \cite{KN13}.

The cost per secondary occurrence reported then amortizes to a predecessor 
search, $\bigO(\lg\lg_\omega(n/\gamma))$ time. This cost is also paid for each
primary occurrence, which might not yield any secondary occurrence to amortize
it. We now prove that this process is sufficient to find all the secondary
occurrences.

\bigskip
\begin{lemma}
The described algorithm reports every secondary occurrence exactly once.
\end{lemma}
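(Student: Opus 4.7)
The plan is to prove two separate statements: (i) \emph{completeness}: every secondary occurrence is reported by the procedure, and (ii) \emph{uniqueness}: no occurrence is reported twice. Both will hinge on the fact that the leaves of the $\Gamma$-tree form a partition of $S$ into explicit blocks, so each text position lies in exactly one leaf, and the only secondary occurrences are those lying entirely inside an unmarked leaf.

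For completeness, I would proceed by induction on a well-founded measure attached to each secondary occurrence. Let $o=S[pos..pos+m-1]$ be a secondary occurrence; by definition it lies in a unique unmarked leaf $B$ at some level $l$. The source $R[k]=S[j'..j'']$ associated with $B$ satisfies $S[j'..j'']=S[i..i']$ where $[i,i']$ is the target, so there is a corresponding occurrence $o'=S[pos'..pos'+m-1]$ with $pos'=pos-i+j'$ lying wholly inside the source. Since the source is covered by one or two marked (internal) nodes at level $l$, the occurrence $o'$ lies inside those internal subtrees. By Lemma~\ref{lem:explicit}, $o'$ is either primary or secondary; if it is secondary, it lies in some unmarked leaf $B''$, which, being a leaf contained in the region of a level-$l$ internal node, must be a proper descendant of that node and therefore reside at some strictly larger level $l''>l$. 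Since levels are bounded by $\lg(n/\gamma)$, iterating this argument terminates with a primary $o'$, which is found initially through the grid $G$. Once $o'$ is found (by inductive hypothesis), the predecessor search on $R$ together with the iterated RMQ will enumerate every source $R[k]$ that covers $pos'$; in particular, it enumerates the specific source associated with $B$, and the report of the corresponding target yields $o$.

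For uniqueness, I would argue that a secondary occurrence $o$ is produced by the algorithm only along a unique ``witness'' path. Because the leaves partition $S$, the leaf $B$ containing $o$ is unique, so the only source that can produce $o$ as a target is $R[k]$, the source of $B$. Moreover, since the mapping from a source position $pos'$ to the target position $i+(pos'-j')$ is a bijection, there is exactly one position $pos'$ inside $R[k]$ whose image is $pos$, namely $pos'=pos-i+j'$. Thus $o$ is generated precisely once for each time the algorithm processes $(o',R[k])$, where $o'$ is the unique occurrence at $pos'$. The standard predecessor-plus-RMQ scan (as in Muthukrishnan~\cite{Mut02}) enumerates each source covering a given occurrence exactly once, and by induction $o'$ itself is generated exactly once, so $o$ is output exactly once.

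The main obstacle is the termination/monotonicity step: one must justify carefully that when $o'$ is secondary it must sit in an unmarked leaf of strictly larger level than $o$'s leaf. This relies on the geometric fact that the source of an unmarked leaf at level $l$ is covered by marked \emph{internal} nodes of the $\Gamma$-tree at level $l$, combined with the partitioning of $S$ by the leaves, so any leaf lying inside such an internal node must be a strict descendant. Everything else (bijectivity of the source-to-target map, uniqueness of the covering leaf, correctness of the predecessor/RMQ enumeration) is a routine consequence of the data-structure definitions.
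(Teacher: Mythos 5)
Your proof is correct and follows essentially the same route as the paper's: both induct on the level of the unmarked leaf containing the secondary occurrence, using the key observation that the source-side copy $o'$ either is primary or lies in an unmarked leaf at a strictly greater level, so the chain of source/target correspondences terminates at a primary occurrence found via the grid. The paper folds completeness and uniqueness into a single induction establishing ``reported exactly once,'' whereas you separate them into two claims, but the underlying argument is the same.
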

\begin{proof}
We use induction on the level $l$ of the unmarked block that contains the
secondary occurrence. All the secondary occurrences of the last level are 
correctly reported once, since there are none. Now consider an occurrence 
$S[pos..pos+m-1]$ inside an unmarked block $S[i..i']$ of level $l$. This block 
is the target of a source $S[j'..j'']$ that spans 1 or 2 consecutive marked 
blocks of level $l$. Then there is another occurrence $S[pos'..pos'+m-1]$, 
with $pos' = pos-i+j'$. Note that the algorithm can report $S[pos..pos+m-1]$
only as a copy of $S[pos'..pos'+m-1]$. If $S[pos'..pos'+m-1]$ is primary, then 
$S[pos..pos+m-1]$ will be reported right after $S[pos'..pos'+m-1]$, because 
$[pos'..pos'+m-1] \subseteq [j'..j'']$, and the algorithm will map $[j'..j'']$ 
to $[i..i']$ to discover $S[pos..pos+m-1]$. Otherwise, $S[pos'..pos'+m-1]$ is 
secondary, and thus it is within one of the marked blocks at level $l$ that overlap the 
source. Moreover, it is within one of the blocks of level $l+1$ into which 
those marked blocks are split. Thus, $S[pos'..pos'+m-1]$ is within an unmarked 
block of level $>l$, which by the inductive hypothesis will be reported 
exactly once. When $S[pos'..pos'+m-1]$ is reported, the algorithm will also 
note that $[pos'..pos'+m-1] \subseteq [j'..j'']$ and will find 
$S[pos..pos+m-1]$ once as well.

If we handle patterns of length $m=1$ by taking the corresponding attractor 
positions as the primary occurrences, then there may be secondary occurrences
in the last level, but those point directly to primary occurrences (i.e., 
attractor positions), and therefore the base case of the induction holds too.
\end{proof}

\medskip
The total search cost with $occ$ primary and secondary occurrences is therefore
$\bigO(m(\log^\epsilon(\gamma\log(n/\gamma))+\log(mn/\gamma)) + 
occ (\log^\epsilon (\gamma\log(n/\gamma)) + 
\log\log_\omega(n/\gamma))) = 
\bigO(m\log n + occ (\log^\epsilon \gamma + \log\log (n/\gamma)))
= \bigO(m\log n + occ\log^\epsilon n)$, for any constant
$\epsilon>0$ defined at indexing time (the choice of $\epsilon$ affects the 
constant that accompanies the size $\bigO(\gamma\log(n/\gamma))$ of the structure $G$).

% with another predecessor structure it could be occ log_\omega \gamma.

\no{

DOES NOT ADD ANYTHING TO THIS PAPER, AND SOLELY FOR EXTRACTION WE COULD SIMPLY
ADD THE STOC STRUCTURE. BTW THIS ONE CAN ALSO BE IMPROVED TO OBTAIN THAT
COMPLEXITY.

\section{Faster Extraction} \label{sec:fastextract}

In this section we show that the substring extraction time of 
Section~\ref{sec:extract} can be improved. More precisely,
we show how to pay the logarithmic overhead just once for the
whole substring and not for each symbol extracted. For this purpose, we must 
augment the $\Gamma$-tree with additional information, which nevertheless does 
not affect its asymptotic size.

\bigskip
\begin{lemma}\label{lemma:extract}
We can store a data structure of size $\bigO(\gamma\log(n/\gamma))$ words supporting the extraction of any substring of length $\ell$ in $\bigO(\ell + \log(n/\gamma))$ time. 
\end{lemma}
\begin{proof}
In addition to the block structure of our previous
solution for extracting single characters, we store \emph{spurious} tree nodes
overlapping some blocks. 
More precisely, let $B_1,B_2,\dots,B_{n_l}$ be the sequence of blocks (explicit or not) at level $l$ described in our previous solution, where $n_l = \gamma\cdot 2^l$. At level $l$, we also consider the sequence of \emph{spurious} blocks $B'_1,B'_2,\dots,B'_{n_l-1}$, where $B'_i = B_i[b_l/2+1..b_l]B_{i+1}[1..b_l/2]$, that is, a spurious block is formed by the second half of a standard block concatenated with the first half of the following standard block. These spurious blocks are classified as marked or non-marked exactly as the original ones, and accordingly subdivided or replaced by a pointer. We also add another difference: all marked blocks at level $l$~--- both standard and spurious~--- now have \emph{three} children at level $l+1$ (instead of two). More precisely, let $B_1,B_2,\dots,B_{n_l}$  and $B'_1,B'_2,\dots,B'_{n_l-1}$ be the sequences of standard and spurious blocks at level $l$, and $\hat B_1,\hat B_2,\dots,\hat B_{n_{l+1}}$  and $\hat B'_1,\hat B'_2,\dots,\hat B'_{n_{l+1}-1}$ be the sequences of standard and spurious blocks at level $l+1$. Then, the three children of $B_i$ are (in left-to-right order of appearance in the text): $\hat B_{2i-1}$, $\hat B'_{2i-1}$, and $\hat B_{2i}$. The three children of $B'_i$ are, instead: $\hat B_{2i}$, $\hat B'_{2i}$, and $\hat B_{2i+1}$. Note that standard blocks now have two parents, therefore our structure is no longer a tree (it is a DAG instead). This subdivision guarantees the following two properties: (i) any substring $S'$ of length at most $b_l/2$ is fully contained in at least one (and at most two) blocks at level $l$, and (ii) any substring $S'$ of length at most $b_l/4$ that is fully contained in a \emph{marked} block $B$ at level $l$, is also fully contained in one of the three children of $B$ at level $l+1$.

All explicit unmarked blocks are, as before, associated with an occurrence
(using a pointer) overlapping some element $j\in\Gamma$. Here we introduce an
additional slight difference with our previous solution: let $B$ be an
unmarked explicit block at level $l$. By definition of $\Gamma$, $B$ has an
occurrence $S[i..i+b_l-1]$ overlapping some $j\in\Gamma$ (if there is more
than one such $j$, pick the smallest one). Then, we associate to $B$ the
pointer $i$. This guarantees a third property that will be used later for
extraction: (iii) let $S'$ be a substring  of length at most $b_l/2$ of our
input string $S$ that is fully contained in an \emph{unmarked} block $B$ at
level $l$: $S' = B[i'..i'+|S'|-1]$. Let $i$ be the pointer associated to $B$,
so that $S' = S[i+i'-1..i+i' + |S'|-2]$. Then, the interval $[i+i'-1,i+i' +
|S'|-2]$ is fully contained in a \emph{marked} block $B'$ at level $l$. To see
this, first note that property (i) implies that the interval $[i+i'-1,i+i' +
|S'|-2]$ is fully contained in a block $B'$ at level $l$. But then, since
$S[i..i+b_l-1]$ overlaps a position $j\in \Gamma$, in the worst case the
distance between $B'$ and $j$ is $b_l-1$, that is, $B'$ must be marked.

The final difference from our previous solution is that now we
fix the maximum level of the $\Gamma$-tree to be the smallest $l^*$ such that $b_{l^*} < 4\log(n/\gamma)$.
%$l^* = \left\lfloor \log\left(\frac{n}{\gamma\cdot \log(n/\gamma)}\right)\right\rfloor$ (i.e., all nodes at level $l^*$ are leaves). 
At level $l^*$, we explicitly store all
the characters contained in the marked blocks (concatenating them in a single string). 
%First note that, by our choice of $l^*$, at this level the block size is $\log(n/\gamma) \leq b_{l^*} \leq 2\cdot \log(n/\gamma)$. 
Note that, since each attractor
position marks $\bigO(1)$ blocks at each level,  the number of 
characters stored at the leaves of level $l^*$ is also
$\bigO(\gamma \cdot b_{l^*}) = \bigO(\gamma \log(n/\gamma))$ (recall that
each character fits in a word). Thus we retain the same asymptotic space.

We show how to use this
augmented $\Gamma$-tree to extract $\log(n/\gamma)$ contiguous characters in
$\bigO(\log(n/\gamma))$ time. Then, to extract a substring of length $\ell$ we
simply break it into $\lceil \ell/\log(n/\gamma) \rceil$ blocks of length $\log(n/\gamma)$ (except the last, which could be shorter), and extract each block in $\bigO(\log(n/\gamma))$ time with our solution. Overall, this will take the claimed $\bigO(\ell + \log(n/\gamma))$ time.
	
Let $S'$ be a substring of length $t = \log(n/\gamma)$ to be extracted.
At level $0$, we can assume that $S'$ is fully contained in a block: if this
is not the case, then by property (i) we have that $\log(n/\gamma) = |S'| >
b_0/2 = n/(2\gamma)$, that is, $n =  \bigO(\gamma\log(n/\gamma))$ and we can store the text naively in $\Theta(n) = \Theta(\gamma\log(n/\gamma))$ words without the need of any data structure. 
We now show that we can map $S'$ from level $0$ to level $l^*$ without splitting it. Once reached level $l^*$, all characters of $S'$ are explicitly stored and we can extract it in optimal time. Let $B$ be an explicit block containing $S'$ at some level $l$ (by the above observation, $B$ exists at level $l=0$). We have two cases. (a) $B$ is a marked block. If $t\leq b_l/4$, then by property (ii) $S'$ is fully contained in one of the three children of $B$ at level $l+1$ (therefore we can proceed with our navigation towards the leaves). If, on the other hand, $t > b_l/4$ then by definition $l=l^*$ and $S'$ is explicitly stored. (b) $B$ is an unmarked block. As observed above, we can assume $t\leq b_l/4$ since, otherwise, we would already have reached level $l^*$. Then, by property (iii) we can map $S'$ to a location where it is fully contained in a marked block $B'$ at level $l$ and proceed with case (a). Since we perform a constant number of operations at each level of our DAG (except the last, where we extract the explicitly stored characters), we obtain our claimed bounds. 
\end{proof}

}

\section{Construction}

If we allow the index construction to be correct with high probability only, 
then we can build it in $\bigO(n+w\lg(n/\gamma)+w\sqrt{\lg w})$ time and
$\bigO(w)$ space (plus read-only access to $S$), using a Monte Carlo method. 
Since $w = \bigO(\gamma \lg(n /\gamma))$ is the number of leaves in the 
$\Gamma$-tree and $\gamma(\log(n/\gamma))^{\bigO(1)} \subseteq \bigO(n)$, the 
time can be written as $\bigO(n+w\sqrt{\log \gamma})$.
In order to ensure a correct index, a Las Vegas method requires 
$\bigO(n\log n)$ time in expectation (and w.h.p.) and $\bigO(n)$ space. 

% sec          det        exp       whp
%
% 4.1   time   n(lls)^2   n         n
%       space  n          w         w
%
% 4.2   time              n log n   n
%       space             n         w
%
% 4.3.1 time   n          n sqrt(w) n
% (X,Y) space  n          w         w
% (z-t) time   w lg n     wlg(n/g)  w lg(n/g)
%       space  w          w         w
%
% 4.4   time   w sqrt(log w)  =     =
%       space  w          w         w

\subsection{Building the $\Gamma$-tree}

Given the attractor $\Gamma$, we can build the index data structure as follows.
At each level $l$, we create an Aho--Corasick automaton \cite{AC75} on the
unmarked blocks at this level (i.e., those at distance $\ge b_l$ from any
attractor), and use it to scan the areas $S[j-b_l+1..j+b_l-1]$ around all the
attractor elements $j \in \Gamma$ in order to find a proper pointer for each 
of those unmarked blocks. This takes $\bigO(n)$ time per level. Since the areas
around each of the $\gamma$ attractor positions are scanned at each level but
they have exponentially decreasing lengths, the scanning time adds up to
$\bigO(\gamma\frac{n}{\gamma} + \gamma\frac{n}{2\gamma} + 
\gamma\frac{n}{4\gamma} + \cdots) = \bigO(n)$. 

As for preprocessing time, each unmarked block is preprocessed only once, and 
they add up to $\bigO(n)$ symbols. The preprocessing can thus be done in time 
$\bigO(n)$ \cite{DL06}. To be able to scan in linear time,
we can build deterministic dictionaries on the edges outgoing from 
each node, in time $\bigO(n(\log\log\sigma)^2)$ \cite{Ruz08}. %thm 3

In total, we can build the $\Gamma$-tree in $\bigO(n(\log\log\sigma)^2)$ 
deterministic time and $\bigO(n)$ space.
% (we can discard the automaton of each level after we use it for scanning).
% CONNECTED TO REMOVED SECTION
%This also clearly holds for the augmented $\Gamma$-tree of Lemma~\ref{lemma:extract}.

To reduce the space to $\bigO(w)$, instead of inserting the unmarked blocks 
into an Aho--Corasick automaton, we compute their Karp--Rabin fingerprints,
store them in a hash table, and scan the areas $S[j-b_l+1..j+b_l-1]$ around
attractor elements $j$. This finds the correct sources for all the unmarked 
blocks w.h.p. Indeed, if we verify the potential collisions, 
the result is always correct within $\bigO(n)$ expected time (further, this
time holds w.h.p.).

\subsection{Building the Fingerprints}\label{sec:building fingerprints}

Building the structures for Lemma~\ref{lemma:KR} requires
(i)  computing the fingerprint of every text prefix ending at block boundaries 
($\bigO(n)$ time and $\bigO(w)$ space in addition to $S$), 
(ii) computing the fingerprint of every explicit block ($\bigO(w)$ time and 
$\bigO(w)$ space starting from the leaves and combining results up to the root),
(iii) for each unmarked explicit block $B$, computing the fingerprint of a
string of length at most $|B|$ (i.e., the fingerprint of $B'[i..b_l]$; see case (B) of Lemma \ref{lemma:KR}). Since %; note that $b_l-i \leq b_l = |B|$
unmarked blocks do not have children, each text character is seen at 
most once while computing these fingerprints, which implies that these values can also be computed
in $\bigO(n)$ time and $\bigO(w)$ space in addition to $S$. 

This process, however, does not include finding
a collision-free Karp--Rabin hash function. As a result, the fingerprinting
is correct w.h.p.\ only. 
We can use the de-randomization procedure of Bille et 
al.~\cite{bille2015longest}, which guarantees to find~--- in $\bigO(n\log n)$ 
expected time\footnote{The time is also $\bigO(n\log n)$ w.h.p., not only in
expectation.} and $\bigO(n)$ words of space~--- a Karp--Rabin hash function that is collision-free among substrings of $S$ whose lengths are powers of two. This is sufficient to deterministically check the equality of substrings\footnote{If the length $\ell$ of the two substrings is not a power of two, then we compare their prefixes and suffixes whose length is the largest power of two smaller than $\ell$.} in the z-fast trie used in the technique~\cite[Lem.~5.2]{GNP18} that we use to quickly find ranges of pattern suffixes/prefixes (in our Section~\ref{sec:primary}).

\subsection{Building the Multisets $\mathcal{X}$ and $\mathcal{Y}$}

To build the multisets $\mathcal{X}$ and $\mathcal{Y}$ for the primary occurrences, we can build
the suffix arrays \cite{MM93} of $S$ and its reverse, $S^{rev}$. This requires 
$\bigO(n)$ deterministic time and space \cite{KSB06}. Then we can scan those 
suffix arrays to enumerate $\mathcal{X}$ and $\mathcal{Y}$ in the lexicographic order.

To sort $\mathcal{X}$ and $\mathcal{Y}$ within $\bigO(w)$ space, we can build 
instead a sparse suffix tree on the $w$ positions of $\mathcal{X}$ or 
$\mathcal{Y}$ in $S$. This can be done in expected time (and w.h.p.)
$\bigO(n\sqrt{\log w})$ and $\bigO(w)$ space \cite{GK17}. If we
aim to build the suffix array correctly w.h.p.\ only, then the 
time drops to $\bigO(n)$.

We must then build the z-fast trie \cite[Thm.~5]{belazzougui2010fast} on the
sets $\mathcal{X}$ and $\mathcal{Y}$. Since we can use any space in
$\bigO(w)$, we opt for a simpler variant described by Kempa and Kosolobov
\cite[Lem.~5]{KK17}, which is easier to build. Theirs is a compact trie that
stores, for each node $v$ representing the string $v.str$ and with parent node
$v.par$: (i) the length $|v.str|$, (ii) a dictionary mapping each character 
$c$ to the child node $v'$ of $v$ such that $v'.str[|v.str|+1]=c$ (if such a 
child exists), and
%, if $v$ is an internal node, 
(iii) the (non-extended) fingerprint of $v.str[1..k_v]$,
where $k_v$ is the \emph{two-fattest number} in the range $[|v.par.str|+1..|v.str|]$, that is, the number in that range whose binary representation has the largest number of trailing zeros. 
The trie also requires a global ``navigating'' hash table that maps the
$\bigO(w)$ pairs $(k_v,\hat\phi(v.str[1..k_v]))$ to their 
corresponding node $v$. 

If $p$ prefixes some string in $\mathcal{X}$ (resp.\ $\mathcal{Y}$) and the
fingerprint function $\hat\phi$ is collision-free among equal-length text 
substrings, then their so-called \emph{fat binary search} procedure finds~--- 
in time $\bigO(\log|p|)$~--- the highest node $v$ such that $v.str$ is prefixed
by a search pattern $p$ (assuming constant-time computation of the fingerprint 
of any substring of $p$, which can be achieved after a linear-time preprocessing
of the pattern). The range of strings of $\mathcal{X}$ (resp.\ $\mathcal{Y}$)
is then associated with the node $v$. 

Their search procedure, however, has an anomaly \cite[Lem.~5]{KK17}: in some
cases it might return a child of $v$ instead of $v$. We can fix this problem
(and always return the correct $v$) by also storing, for every internal trie 
node $v$, the fingerprint of $v.str$ and the exit character 
$v.str[|v.par.str|+1]$. Then, we know that the procedure returned $v$ if and 
only if $|v.par.str| < |p|$, the fingerprint of $v.par.str$ matches that of 
$p[1..|v.par.str|]$ and 
$v.str[|v.par.str|+1] = p[|v.par.str|+1]$. If these conditions do not hold, we 
know that the procedure returned a child of $v$ and we fix the answer by moving
to the parent of the returned node.\footnote{This fix works in general. In
their particular procedure, we do not need to check that $v.str[|v.par.str|+1] =
p[|v.par.str|+1]$ (nor to store the exit character).
Alternatively, we could directly fix their procedure by similarly storing the 
fingerprint of 
$v.str$ ($\mathsf{hash}(v.str)$, in their article) and changing line 6
of their extended version \cite{KK17} to ``{\bf if} $v.len < |pat|
~\mathbf{and}~\mathsf{hash}(v.str) = \mathsf{hash}(pat[1..v.len]) 
~\mathbf{and}~ v.map(pat[v.len+1]) \not=$ {\bf nil then} $v \leftarrow
v.map(pat[v.len+1])$;''.} 

If $p$ does not prefix any string in $\mathcal{X}$ (resp. $\mathcal{Y}$), or 
if $\hat\phi$ is not collision-free, the search (with our without our fix) 
returns an arbitrary range. We recall that, to cope with the first 
condition (which gives the z-fast trie search the name {\em weak} prefix 
search), Lemma \ref{lem:ztrie} adds a deterministic check that allows
discarding the incorrect ranges returned by the procedure. The second
condition, on the other hand, w.h.p.\ does not fail. Still, we recall that we
can ensure that our function is collision-free by checking, at construction
time, for collisions among substrings whose lengths are powers of two, as 
described in Section \ref{sec:building fingerprints}.

The construction of this z-fast trie starts from the sparse suffix tree of
$\mathcal{X}$ (or $\mathcal{Y}$), since the topology is the same. 
The dictionaries giving constant-time access to the nodes' children can be built correctly w.h.p.\ in 
time $\bigO(w)$, or be made perfect in expected time (and w.h.p.) $\bigO(w)$
\cite{Wil00}. 
Alternatively, one can use deterministic dictionaries, which can be built in worst-case
time $\bigO(w(\log\log \sigma)^2)$ \cite{Ruz08}. In all cases, the construction
space is $\bigO(w)$.
Similarly, the navigating hash can be built correctly w.h.p.\ in 
time $\bigO(w)$, or be made perfect in expected time (and w.h.p.) $\bigO(w)$. 
We could also represent the navigating hash using
deterministic dictionaries built in worst-case
time $\bigO(w(\log\log w)^2)$ \cite{Ruz08} and space $\bigO(w)$. 
The $\bigO(w)$ fingerprints $\hat\phi(v.str)$ and the hashes of the pairs $(k_v,\hat\phi(v.str[1..k_v]))$ can be computed in 
total time $\bigO(w\log(n/\gamma))$ by using Lemma~\ref{lemma:KR} to compute 
the Karp--Rabin fingerprints.

\subsection{Structures to Track Occurrences}

A variant of the grid data structure $G$ \cite{BP16}, with the same
space and time performance, can be built in $\bigO(w\sqrt{\log w})$ 
deterministic time, and $\bigO(w)$ space. The arrays $T$ and $R$
can be built in $\bigO(w)$ space and $\bigO(w\log\log w)$ deterministic
time \cite{Han04}.

The RMQ structure on $R$ requires $\bigO(w)$ deterministic construction time 
and $\bigO(w)$ bits \cite{FH11}. The predecessor data structure \cite{PT06},
however, requires perfect hashing. This can be built in $\bigO(w)$ space and 
$\bigO(w(\log\log w)^2)$ deterministic time \cite{Ruz08}, or in
$\bigO(w)$ expected time (also holding w.h.p.).

\section{Conclusions}

We have introduced the first {\em universal} self-index for repetitive text
collections. The index is based on a recent result
\cite{kempa2018roots} that unifies a large number of dictionary compression
methods into the single concept of {\em string attractor}. For each compression
method based on Lempel--Ziv, grammars, run-compressed BWT, collage systems,
macro schemes, etc., it is easy to identify an attractor set of the
same asymptotic size obtained by the compression method, say $\gamma$. Thus, our construction 
automatically yields a self-index for each of those compression methods, within
$\bigO(\gamma\log(n/\gamma))$ space. No structure is known of size 
$o(\gamma\log(n/\gamma))$ able to efficiently extract a substring from the 
compressed text (say, within $\bigO(\mathrm{polylog}\; n)$ time per symbol), 
and thus $\bigO(\gamma\log(n/\gamma))$ is the minimum space we could hope for 
an efficient self-index with the current state of the art.

Indeed, no self-index of size $o(\gamma\log(n/\gamma))$ is known: the smallest ones 
%(which cannot guarantee an efficient search time---say, in $\bigO((m+occ)\,\textrm{polylog}\,n)$), 
use $\bigO(z)$~\cite{KN13} and $\bigO(r)$~\cite{GNP18} space, respectively, yet there are text families where $z,r = \Omega(b^*\log n) = \Omega(\gamma^*\log n)$ \cite{GNPlatin18} (where $b^*$ and $\gamma^*$ denote the smallest macro scheme and attractor sizes, respectively).
%Indeed, no self-index of size $o(\gamma\log(n/\gamma))$ is known: the smallest one \cite{KN13}, which cannot guarantee an efficient search time (say, in $\bigO((m+occ)\,\textrm{polylog}\,n)$), uses $\bigO(z)$ space, yet there are text families where $z = \Omega(b^*\log n) = \Omega(\gamma^*\log n)$ \cite{GNPlatin18} (where $b^*$ and $\gamma^*$ denote the smallest macro scheme and attractor sizes, respectively).
The most time-efficient self-indexes use $\Theta(r\log(n/r))$ and $\Omega(z\log(n/z))$
space, which is asymptotically equal or larger than ours. The search time of 
our index, $\bigO(m\log n + occ \log^\epsilon n)$ for any constant $\epsilon>0$,
is close to that of the fastest of those self-indexes, which were
developed for a specific compression format (see \cite[Table 1]{GNP18}).
Moreover, our construction provides a self-index for compression methods for
which no such structure existed before, such as collage systems and
macro schemes. Those can provide smaller attractor sets than
the ones derived from the more popular compression methods.

We can improve the search time of our index by using slightly more space. Our current
bottleneck in the per-occurrence query time is the grid data structure $G$,
which uses $\bigO(w)$ space and returns each occurrence in time 
$\bigO(\log^\epsilon w)$. Instead, a grid structure \cite{CLP11} using 
$\bigO(w\log\log w)=\bigO(\gamma\log(n/\gamma)\log\log n)$ space obtains the $occ_p$
primary occurrences in time $\bigO((occ_p+1)\log\log w)$. This slightly larger 
version of our index can then search in time $\bigO(m\log n + occ
\log\log n)$. This complexity is close to that of some larger indexes in the
literature for repetitive string collections (see \cite[Table 1]{GNP18}).

A number of avenues for future work are open, including supporting more complex
pattern matching, handling dynamic collections of texts, supporting document
retrieval, and implementing a practical version of the index. Any advance in
this direction will then translate into all of the existing indexes for
repetitive text collections.

\section*{Acknowledgements}

We thank the reviewers for their outstanding job in improving our presentation
(and even our bounds).

\bibliographystyle{plain}
\bibliography{universal-indexing}

\end{document}